\DeclareMathOperator*{\argmin}{argmin}
\newtheorem{prop}{Proposition}
\newtheorem{definition}{Definition}
\theoremstyle{definition}
\newtheorem{assumption}{Assumption}
\begin{document}

\title{Mode Selection and Target Classification in Cognitive Radar Networks}
\author{William W. Howard, Samuel R. Shebert, Benjamin H. Kirk, R. Michael Buehrer
\thanks{W.W. Howard, S.R. Shebert and R.M. Buehrer are with Wireless@VT, Bradley Department of ECE, Virginia Tech, Blacksburg, VA, 24061. \\ 
B.H. Kirk is with the U.S. Army Research Laboratory, Adelphi, MD 20783. (e-mail:{benjamin.h.kirk.civ}@army.mil).\\
Distribution Statement A: Approved for public release. Distribution is unlimited. \\
Contact:$\{${wwhoward, sshebert}$\}$@vt.edu  \\
}
}

\maketitle
\pagenumbering{gobble}

\begin{abstract}
Cognitive Radar Networks were proposed by Simon Haykin in 2006 to address problems with large legacy radar implementations - primarily, single-point vulnerabilities and lack of adaptability. 
This work proposes to leverage the adaptability of cognitive radar networks to trade between active radar observation, which uses high power and risks interception, and passive signal parameter estimation, which uses target emissions to gain side information and lower the power necessary to accurately track multiple targets. 
The goal of the network is to learn over many target tracks both the characteristics of the targets as well as the optimal action choices for each type of target.
In order to select between the available actions, we utilize a multi-armed bandit model, using current class information as prior information. 
When the active radar action is selected, the node estimates the physical behavior of targets through the radar emissions.
When the passive action is selected, the node estimates the radio behavior of targets through passive sensing. 
Over many target tracks, the network collects the observed behavior of targets and forms clusters of similarly-behaved targets. 
In this way, the network meta-learns the target class distributions while learning the optimal mode selections for each target class. 
\end{abstract}

\begin{IEEEkeywords}
radar networks, reinforcement learning, cognitive radar, signal classification
\end{IEEEkeywords}

\section{Introduction}
\label{sec:intro}
There is a desire for low size, weight, and power (\textbf{SWaP}) devices that must operate in diverse environments. 
To this end, we examine a network of low-power, multi-mode cognitive radar devices (Cognitive Radar Network, or \textbf{CRN}) \cite{howard2022_MMABjournal} \cite{haykin2005}. 
These radar devices (\emph{nodes}) \cite{thornton2022_universaljournal} can choose between active radar observation and passive signal detection and classification (Electronic Support Measures, \textbf{ESM}) in each of many time steps\footnote{Active and passive measurement are used to detect and track multiple objects. } \cite{Bogler1987}. 
Over the course of many target tracks, the CRN (via a central coordinator, or \textbf{CC}) must form \emph{classes} of targets which share similar properties. 
We show that if the class of a given target determines its physical behavior, more accurate knowledge of the classes in an environment will 1) reduce the effective radiated power from each node and thereby the CRN's probability of being detected, and 2) improve target tracking performance. 

To provide an alternative method (i.e., other than radar) of observing targets, we leverage the fact that most modern targets, civilian or military, tend to have characteristic radio emissions (e.g., FM voice communication and Automatic Dependent Surveillance-Broadcast (\textbf{ADS-B}) in general aviation; control, telemetry, and data downlink in consumer unmanned aerial vehicles (\textbf{UAVs}); two-way voice communications in aerial balloons). 
By sensing the spectrum rather than conducting radar observations, we allow the nodes to take advantage of additional information while reducing the operating power requirements. 
In addition, these different broad classes of targets exhibit different physical motion. Generally, commercial aircraft travel at moderate to high velocities and conduct low-G maneuvering, consumer drones have slow to moderate velocities and with the capability for high-G maneuvers, and balloons have slow velocities and acceleration. 
Therefore, accurately estimating the radio behavior of these targets will map strongly to specific physical behavior. 
We will further develop the idea of target classes in Section \ref{sec:targets}. 

By forming models of these target classes over time, we show that the CC can selectively choose between active and passive modes for each node to quickly identify the class of a target and allocate the necessary radar time to ensure adequate tracking performance. 
We discuss a Multi-Armed Bandit (\textbf{MAB}) model for mode selection, where in each time step the CC evaluates the scenario to update the bandit model and select new modes for the next time step. 
While in general the scenario need not be stationary with respect to reward, we make the assumption that each epoch is sufficiently short in time and the scenario sufficiently large in space such that targets do not move far from their original position.  

The CRN we discuss in this work contains a single cognitive CC which selects actions for $N$ multi-mode nodes\footnote{We assume a stochastic Poisson point process (\textbf{PPP}) generates the locations for these nodes, and say that $N$ is the random variable described by this model. Similarly, $M$ is the random variable describing the number of targets. }. 
Similarly, there are $M$ targets throughout the environment with physical and radio characteristics that we will discuss in Section \ref{sec:targets}.

\subsection{Contributions} 
We build on previous contributions in the areas of radar network control and cognitive radar. Specifically, we contribute the following: 
\begin{itemize}
    \item A model for mode selection in multi-function sensor node networks (specifically CRNs). 
    \item An analysis of multiple target classes and their characteristic motion and signal models. 
    \item Mathematical analysis on a technique to cluster targets by class. 
    \item Numerical simulations to support our conclusions.  
    \item We show that our proposed technique outperforms pure radar observation as well outperforming random mode selection without class formation. 
\end{itemize}

\subsection{Notation}
We use the following notation. 
Matrices and vectors are denoted as bold upper $\mathbf{X}$ or lower $\mathbf{x}$ case letters respectively.
Functions are shown as plain letters $F$ or $f$. 
Sets $\mathcal{A}$ are shown as script letters. 
The transpose operation is $\mathbf{X}^T$. 
Random variables are written as upper-case letters $X$, and their distributions will be specified. 
The set of all real numbers is $\mathbb{R}$ and the set of integers is $\mathbb{Z}$. 
The Euclidean norm of a vector $\mathbf{x}$ is written as $||\mathbf{x}||$. 
Estimates of a true parameter $p$ are given as $\hat{p}$. 

\subsection{Organization} 
Section \ref{sec:background} covers recent work in related areas and formulates the problem. 
Section \ref{sec:targets} provides a mathematical description of targets. 
Section \ref{sec:methods} discusses the methods we implement to solve the problem. 
Section \ref{sec:simulations} provides numerical simulations to support our conclusions which are drawn in Section \ref{sec:conclusions}.


\section{Background}
\label{sec:background}
CRNs and general cognitive radar are two areas of recent study. 
Recent contributions regarding cognitive radar include  studies of waveform selection and optimization \cite{thornton2022_universaljournal} \cite{thornton2020efficient} and meta-cognition \cite{martone2017adaptable}. 
Regarding CRNs, previous works have investigated collaborative learning \cite{howard2022_MMABjournal} \cite{howard2022_decentralized_conf} \cite{howard2023_hybridjournal} and Age-of-Information metrics  \cite{howard2023_timelyjournal}. 

The authors of \cite{6807568} present a centralized passive estimation network, where the nodes act as amplify-and-forward unit, where the total amplification of the network is power-limited. 
All of the decision-making in the network is located in the central coordinator, and is confined to allocating the limited amplification power to the nodes. 
The targets are modeled as having a constant quadratic mean complex-valued signal. 
The FC fuses the node measurements with a goal of estimating the true target signal. 
This work is useful because it provides a framework for signal estimation, which could support classification. 
However, since the different measurements are fused, this technique is not useful for direction of arrival estimation. 

The problem of multi-sensor target identification has been previously studied in \cite{Bogler1987, Hong1993} and more recently in \cite{Lei2020, Li2022}. These works use Dempster-Shafer evidence theory to fuse measurements from multiple sensors to improve target identification rates. In \cite{Challa2001} and \cite{Cao2018}, radar and passive signal classification are combined to improve tracking performance by classifying targets. However, it is assumed that the pairing between radar targets and emitted signals are known \textit{a priori}.

\section{Target Modeling}
\label{sec:targets}
\subsection{Class Definitions}
\label{ss:classes}
Generally, each target in the observable region $R$ can be partially or wholly described by several parameters. 
Let the parameters describing a target $m$ be collected into an ordered set $\mathcal{X}_m$. 
We describe a parameter $E_m$ of $\mathcal{X}_m$ as a Markov process with a stationary distribution $\boldsymbol{\pi}(E_m)=[\pi_1, \pi_2, \dots, \pi_p]$ and a state transition matrix $\mathbf{P}(E_m)$. 
The stationary distribution describes the probability that $E$ is\footnote{Sometimes the subscript is omitted in order to describe property $E$ without dependence on the particular target. } in each of the $p$ states, and the state transition matrix consists of entries $p_{i,j}$ which describe the probability that $E$ switches from state $i$ to state $j$. 

We begin with a few definitions: 

\begin{definition}[Equal in State Distribution]
\label{def:eq_in_dist}
    Two random variables $X$ and $Y$ are said to be equal in state distribution if they consist of the same number of states, and if the stationary distribution is equal.  
\end{definition}

\begin{definition}[Target Class]
\label{def:target_class}
    Let target $m_0$ be described by $n_0$ observable parameters collected into the set $\mathcal{X}_0$. If a target $m_1$ has the property that each parameter in $\mathcal{X}_1$ is equal in state distribution with the corresponding property in $\mathcal{X}_0$, then target $m_0$ is said to be of the same class as $m_1$.  Denote the class as $\mathcal{C}$. 
\end{definition}

\begin{definition}[Target Family]
\label{def:target_family}
    A family $\mathcal{F}$ is a group of target classes $\{\mathcal{C}_1, \mathcal{C}_2, \dots\}$ with the following properties: 
\begin{itemize}
	\item Each of $\mathcal{C}_1, \mathcal{C}_2, \dots$ consists of the same number of parameters, and each group of parameters shares the same number of states\footnote{In other words, if one class $\mathcal{C}_i\in\mathcal{F}$ consists of parameters $\mathcal{E}_1, \mathcal{E}_2, \mathcal{E}_3$ which have $p_1, p_2, p_3$ states respectively, then all classes $\mathcal{C}_{j\neq i} \in \mathcal{F}$ will also consist of those parameters and states. }. 
	\item Within a family, there is a one-to-one and onto mapping between the parameters of a given class. 
\end{itemize}
\end{definition}

\begin{prop}[Unique Class]
\label{prop:unique}
Let target $m$ with parameter estimate $\hat{\mathcal{X}}_m$ be drawn from a family $\mathcal{F}$. Then, the parameter estimate $\hat{\mathcal{X}}_M$ maps to a unique class. 
\end{prop}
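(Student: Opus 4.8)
The plan is to recognize that membership in a class is governed by an equivalence relation, so that each target necessarily lands in exactly one block. Define a relation $\sim$ on targets drawn from $\mathcal{F}$ by declaring $m_0 \sim m_1$ whenever every parameter of $\mathcal{X}_0$ is equal in state distribution (Definition \ref{def:eq_in_dist}) to the corresponding parameter of $\mathcal{X}_1$. By Definition \ref{def:target_class}, the classes of $\mathcal{F}$ are precisely the blocks of $\sim$, so establishing that $\sim$ is an equivalence relation immediately yields that the classes partition the family, and hence that the target's class is unique.

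First I would verify the three equivalence-relation properties, each of which reduces to the corresponding property of equality applied to the state counts and the stationary distributions $\boldsymbol{\pi}(\cdot)$. Reflexivity and symmetry are immediate from the symmetric phrasing of Definition \ref{def:eq_in_dist}. Transitivity is the substantive step: if each parameter of $m_0$ shares its number of states and its stationary distribution with that of $m_1$, and $m_1$ likewise with $m_2$, then the same holds for $m_0$ and $m_2$ by transitivity of equality of the state counts and of the probability vectors. Here the family structure of Definition \ref{def:target_family} is essential: because all classes in $\mathcal{F}$ carry the same parameters with the same state counts, and because the one-to-one and onto parameter mapping fixes a consistent meaning for ``the corresponding parameter'' across every class, the per-parameter comparisons in the chain $m_0 \sim m_1 \sim m_2$ are all made between matched parameters and therefore compose.

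With $\sim$ established as an equivalence relation, uniqueness follows from the standard fact that equivalence classes are disjoint. Were $\hat{\mathcal{X}}_m$ to map to two distinct classes $\mathcal{C}_i \neq \mathcal{C}_j$, transitivity would force every parameter's stationary distribution in $\mathcal{C}_i$ to equal that in $\mathcal{C}_j$, whence $\mathcal{C}_i = \mathcal{C}_j$, a contradiction. Equivalently, a class within $\mathcal{F}$ is completely characterized by its tuple of stationary distributions, its ``signature,'' and no two distinct classes can share a signature without collapsing into one.

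The main obstacle I anticipate is not the equivalence-relation bookkeeping but reconciling the estimate $\hat{\mathcal{X}}_m$ with the exact equalities demanded by Definition \ref{def:eq_in_dist}: a noisy finite-sample estimate will generically not coincide exactly with any class signature. I would handle this by interpreting the map as assignment to the class whose signature matches the true (or asymptotically recovered) stationary distributions, so that the statement is really that the underlying class is well defined and single-valued, with the estimate inheriting this property once it is consistent. Stating this convention carefully, and confirming that distinct class signatures remain separated so that the matching class is unambiguous, is the step that needs the most care.
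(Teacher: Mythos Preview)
Your proof is correct, but it takes a different route from the paper's. The paper gives a one-line argument that invokes only Definition~\ref{def:target_family}: since the mapping between parameters within a family is declared to be one-to-one and onto, it is invertible, and hence any parameter estimate $\hat{\mathcal{X}}_m$ determines a unique class $\mathcal{C}\in\mathcal{F}$. You instead build the result from Definitions~\ref{def:eq_in_dist} and~\ref{def:target_class} by showing that ``same class'' is an equivalence relation and that the classes therefore partition the family; the bijectivity in Definition~\ref{def:target_family} enters your argument only as a bookkeeping device that makes ``the corresponding parameter'' unambiguous across classes so that transitivity composes. Your approach is more self-contained and makes explicit why two distinct classes cannot share the same tuple of stationary distributions, whereas the paper simply packages this into the assumed bijection. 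You also surface the estimate-versus-truth tension that the paper relegates to a footnote, which is a worthwhile clarification even if it is not strictly part of the uniqueness claim.
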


\begin{proof}[Proof of Prop. \ref{prop:unique}]
By Def. \ref{def:target_family}, and since functions which are one-to-one and onto (i.e., bijective) are invertible, there is a uniquely identifiable class $\mathcal{C}\in\mathcal{F}$ for a parameter estimate $\hat{\mathcal{X}}_m$\footnote{Note however that since this parameter estimate is not necessarily correct, the estimated class is not necessarily correct. }. 
\end{proof}

Targets within a family $\mathcal{F}$ have the useful feature that if the stationary distribution of one element $E_m$ of target $\mathcal{X}_m$ can be estimated, the other elements can be identified, since no other classes in the family will have the same stationary distribution. 
Said another way, the elements of $\mathcal{X}_m$ which is of $\mathcal{C}_i$ for any class $i$ in a family $\mathcal{F}$ are well characterized if one element is known (see the proof of Prop \ref{prop:unique}).

\begin{table}
    \centering
    \caption{Example Target Classes}
    \begin{tabular}{c|c|c}
        Example Class & Motion & Signals \\[0.5ex] 
        \hline
        UAV & Mid-altitude, dynamic & Control, telemetry \\
        General Aviation & Low-altitude, stable & ADS-B, FM Voice \\
        High-Altitude Balloon & Low maneuverability & Telemetry \\ [1ex] 
    \end{tabular}    
    \label{tab:classes}
\end{table}

\subsection{Spectrum Sensing}
\label{ss:sensing}
The targets of interest operate their own radar and/or communication systems, resulting in electromagnetic emissions that can be detected by a spectrum sensing receiver.
A key aspect of passive spectrum sensing is that a target's transmitter characteristics impact the probability of detection at the receiver, even if the target is nearby. 
This is contrary to radar, which can detect in-range targets in a 'non-cooperative' manner.

The detection performance of the spectrum sensing receiver is modeled in two parts: the maximum detectable range and the probability of intercept of the target's emissions. 
\subsubsection{Maximum Detectable Range}
The maximum detectable range of a target depends on the transmit power ($P_t$), the transmit and receive antenna gains ($G_t$ and $G_r$), the wavelength ($\lambda$), the distance between the target and sensor ($R$), the receivers noise power ($P_n$), and the transmit and receiver losses ($L$). The instantaneous (i.e., non-integrated) signal to noise ratio (SNR) at the spectrum sensing receiver for airborne targets will be \cite[Chap. 2]{wiley2006}: 
\begin{equation}
	SNR = \frac{P_t G_t G_r \lambda^2}{(4 \pi R)^2 P_n L}
\end{equation}
Where the receiver noise power is:
\begin{equation}
	P_n = k T_0 F B
\end{equation}
$k$ is Boltzmann's constant, $T_0$ is 290 Kelvin, $F$ is the receiver noise figure, and $B$ is the receiver bandwidth. We assume a noise figure of 10 and the receive bandwidth is 1 MHz per channel. The SNR as a function of range and transmit power is shown in Figure \ref{fig:receiver_snr}. 

The SNR required for detection will depend on the detection algorithm, signal capture duration, and the signal of interest. For cyclostationary detectors with a signal duration greater than a few milliseconds, it is possible to detect signals at a rate close to 100\% at or below 0 dB SNR with a false alarm rate less than 1/100 \cite{cyclo_ofdm_detection, cyclo_lte_detection}. Therefore, we assume that an SNR of 0 dB provides nearly 100\% detection rate and consider the false alarm rate to be negligible. Further, we assume that each signal of interest has a unique set of \textit{cycle frequencies} that can be identified by cyclostationary detectors \cite{cyclo_ofdm_detection, cyclo_lte_detection} with negligible confusion between signal classes.

The maximum detectable range for passive detection of a target is defined as the maximum range that 0 dB SNR is achieved. Based on Figure \ref{fig:receiver_snr}, the maximum range ranges from 10s to 100s of km depending on the target's transmit power. Targets that are within the maximum detectable range are referred to as \textit{in-range}.

\begin{figure}
    \centering
    \includegraphics[scale=0.6]{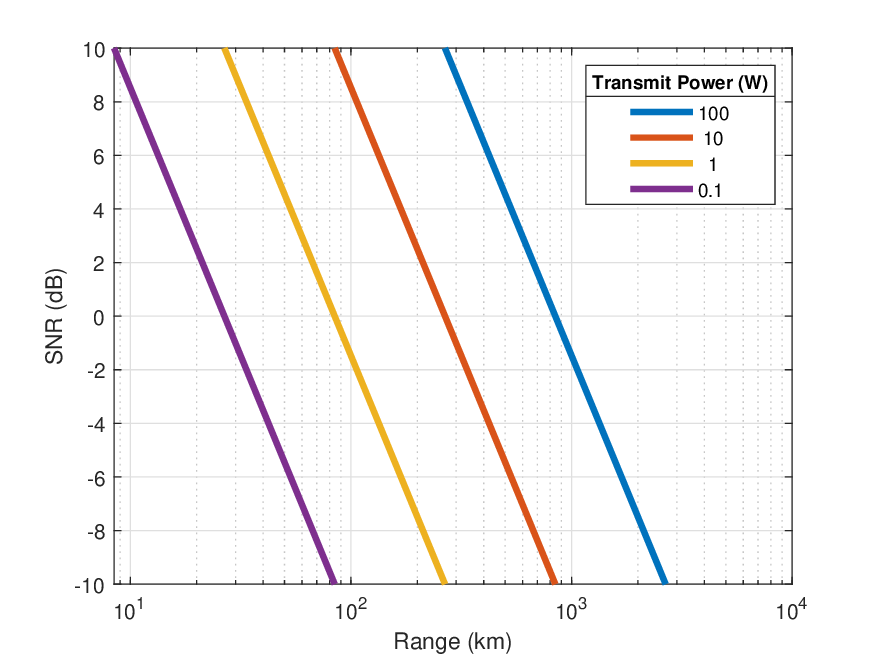}
    \caption{Spectrum sensing receiver SNR at a center frequency of 1 GHz, omnidirectional transmit and receive antennas, and 3 dB of losses. Given a SNR requirement of 0 dB for detection, targets can reasonably be detected 10s to 100s of km away.}
    \label{fig:receiver_snr}
\end{figure}

\subsubsection{Probability of Intercept}
The probability of intercept characterizes dynamic factors that decrease the rate that an in-range target is detected.
In general, the probability of intercept depends on the sensing receivers measurements in time, frequency, and space overlapping with the targets transmit time, frequency, and direction \cite[Chap. 4]{wiley2006}. For simplicity, we assume that the receiver has wide enough bandwidth to observe all target emissions in a time step, and both transmitter and receiver use omnidirectional antennas\footnote{Note that using directional antennas would increase the maximum detectable range, but decrease the probability of intercept.}. Therefore, the probability of intercept is governed by the transmit fraction of time. The on/off characteristics of $m^{th}$ transmitter is modeled as a two state Markov chain with transition probability matrix $P(tx)$:
\begin{equation}
	P(tx) = \begin{bmatrix} \text{Pr}(\text{On} | \text{On}) &  \text{Pr}(\text{Off} | \text{On})\\ \text{Pr}(\text{On} | \text{Off}) &  \text{Pr}(\text{Off} | \text{Off})\end{bmatrix}
\end{equation}

Further, the signal type emitted by the $m^{th}$ transmitter is one of $s$ finite states with stationary distribution $\pi(S_m)$ and transition matrix $P(S_m)$. 
Therefore, a target is passively detected by a sensing node when the target is \textit{in-range}, the target is actively transmitting, and the sensing node is in spectrum sensing mode (as opposed to radar mode) for the given time step.


\section{Methods}
\label{sec:methods}
Let $M$ targets $\{m_1, m_2, \dots\}$ exist in the region $R$ with the position of each target $m$ described by $\textbf{X}_m(t)=[x_m(t), y_m(t), z_m(t)]$. 
The function $X_m(t)$ evolves in time according to a Markov motion model $V_m(t)$ \cite{ristic2003beyond} which describes the type of motion with stationary distribution $\boldsymbol{\pi}(V_m)=[\pi_1,\pi_2,\dots,\pi_p]$ representing the motion states and transition matrix $\mathbf{P}(V_m)$ representing the transition probabilities. 

Further, target $m$ emits a signal type with on/off characteristics that can be described by the Markov chain $S_m(t)$.
During time step $t_0$, the signal emitted by target $m$ are characterized by one of $s$ finite states. 
Coupled with a direction-of-arrival estimator, the signal classification estimate are associated with the radar targets.

So, the parameter set $\mathcal{X}$ for target $m$ consists of the random variable describing the motion  model $V_m(t)$ with $v$ states and the random variable describing the signal model $S_m(t)$ with $s$ states.

Over time and according to the modes selected for each node, the CC builds a set of estimated parameters for each observed target $m$, $\hat{\mathcal{X}}_m$. 
Using these parameters and the record of target behavior, the CC is able to tune filters to the expected motion model of the targets. This is shown in Figure \ref{fig:tuned_filters}, where the tracking accuracy using an untuned filter (i.e., no prior motion model known) and filter which is tuned to the correct motion model transition probabilities and process noise. 
The tuning is accomplished using the prior tracks of targets in each class. 
The performance is improved because the filter can use more accurate parameters earlier in the track. 

\begin{figure}[h]
    \centering
    \includegraphics[scale=0.6]{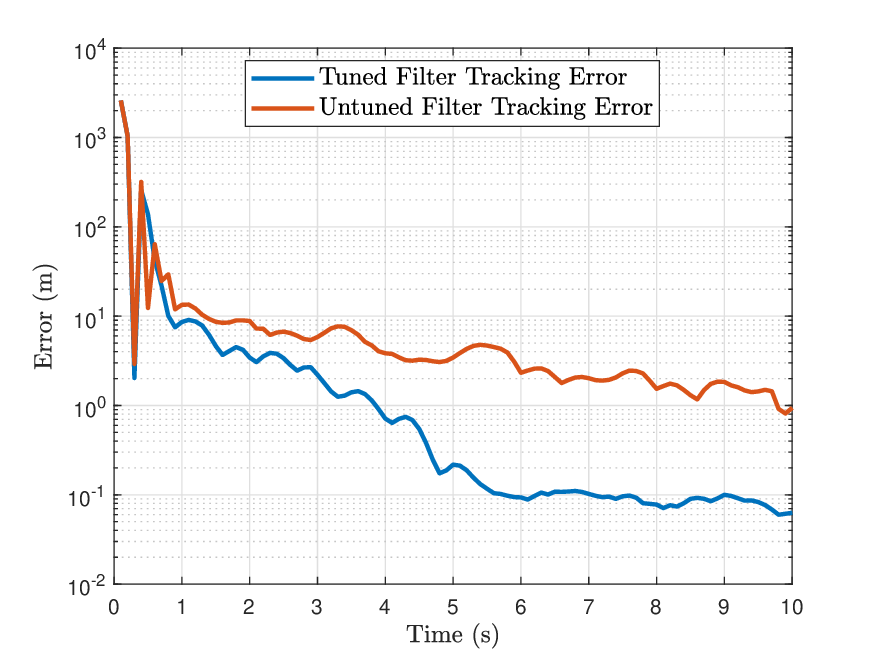}
    \caption{Kalman filters which are tuned to the process noise and motion model probabilities for a specific target class result in lower tracking error for targets of that class.  }
    \label{fig:tuned_filters}
\end{figure}

\subsection{Class Formation}
\begin{assumption}[Single Family]
The targets present in a given environment belong to a single family. 
\end{assumption}

When the game begins, the CC and radar nodes have no knowledge of the targets which are present. 
So the CC must, at several points in time, form and update target classes. 
This is accomplished through a k-means clustering process, using the mutual information between target parameter distributions as the distance. 
Targets which are sufficiently close are grouped together, and a class is formed by the mean of the parameter distributions. 
The number of target classes $\hat{k}$ is determined using the Akaike Information Criterion (\textbf{AIC}) \cite{1311138}. 


\subsection{Target Measurement}
In order to measure the target parameter distributions, the radar nodes are equipped with two ``modes'': 
\begin{itemize}
\item \textbf{Active} measurement consists of radar processing for a single time step, and allows the radar node to estimate the position and velocity of a target by measuring the range, angle, radial velocity, and angular velocity of the target. 
\item \textbf{Passive} measurement consists of spectrum sensing for a single time step, and enables estimation of the signal class of an emission. Additionally, the direction of arrival is estimated to associate observed emissions to radar targets.
\end{itemize}

These two modes allow the FC to track over time the: 
\begin{itemize}
\item \textbf{Motion model} of the target, which consists of a Markov chain and several motion states. 
\item \textbf{Signal model} of the target, which consists of a Markov chain of signal types.
\end{itemize}

\subsection{Mode Selection}
Modes are selected once per time step for all nodes by the CC. 
Mode selection is accomplished by the evaluation of a utility matrix, which takes into account 1) the estimated target classes observed by each node, 2) the 
time since each node performed each action, and 3) the trade-offs inherent in multi-node networks. 
In other words, when a target is observed by two nodes and one of the nodes performs radar, there may be less interest in the second node also performing radar. 
We utilize a reinforcement learning agent to select the mode in each time step. 
Since each epoch is relatively short in duration, and the scene is spatially large compared to the distance a target can travel, each epoch can be assumed to be stationary in reward. 

\subsubsection{Learning Formulation}
We use the common Upper Confidence Bound (\textbf{UCB}) \cite{bandits} \cite{UCB_fischer} formulation, where a single player selects from finitely many actions (``arms'') and observes a corresponding reward. 
Over many iterations, the goal of the player is to maximize the total expected reward. 
To reduce the complexity\footnote{An alternative approach might assign a single bandit algorithm with one arm per combination of node actions, which would total $2^{N}$ arms. Our approach covers the same action space, while reducing the number of arms per bandit algorithm to two. }, we pose the problem with one bandit algorithm per node, which are all evaluated by the FC. 
The algorithms could possibly be implemented by each node, but since the reward function (shown below) requires global information, this approach would require more communication. 

\paragraph{Rewards} The reward for each action is generated by the normalized Shannon entropy of the motion model distribution. 
This value is used since it is constrained to the unit interval and reflects the information content of the motion model distribution: as the distribution of states becomes more flat, the Shannon entropy will increase. 
This is particularly useful because as targets become more maneuverable, Kalman filters become less accurate and therefore benefit from more frequent updating \cite{5977487}. 

\begin{equation}
\label{eq:rewards}
	u_n(t) = \frac1{M_n}\sum_{j=1}^{M_n}[\eta(V_j(t)), \eta(S_j(t))] 
\end{equation}

\begin{equation}
\label{eq:shannon}
	\eta(X(t)) = \sum_{i=1}^{n_X} \frac{x_i\log_2(x_i)}{\log_2(n_X)}
\end{equation}

Eq. (\ref{eq:rewards}) shows the reward for selecting either action at node $n$, where $\eta(\cdot)$ represents the normalized Shannon entropy. 
Eq. (\ref{eq:shannon}) shows the Shannon entropy for a distribution $X$ with $n_X$ states $x_i$.  
The reward for selecting the radar action is dependent on the distribution of motion states of covered targets, and the reward for selecting the passive action is dependent on the distribution of the signal states of covered targets. 
So, the reward formulation is dependent on the all of the targets viewed by a particular node. 

\paragraph{Mode Selection} Then, in each time step $t$, Eq. (\ref{eq:UCB}) is used to select the action for node $n$ where $N_t(n)$ is the number of times each mode has been selected before time $t$. 

\begin{equation}
\label{eq:UCB}
 \text{Mode}(t) = \argmin \left[u_n + \sqrt{\frac{\log t}{N_t(n)}} \right]
\end{equation}



\section{Numerical Simulations}
\label{sec:simulations}
We simulate a CRN with the parameters listed in Table \ref{tab:params}. 
In particular, we simulate fifteen epochs. 
After each epoch (i.e., scenario) the CC updates the list of target classes. 
When the game begins, there are no classes, and when it ends the classes should have high accuracy. 
Figure \ref{fig:classes} shows that this is the case; class accuracy increases in each epoch. 
Further, Figure \ref{fig:classes} also shows that the accuracy with which targets are associated to a class increases in each epoch. 
This, coupled with the result shown in Figure \ref{fig:tuned_filters} which shows that tracking accuracy improves when a tuned filter is used, implies that the observed tracking accuracy in the entire network should improve. 

\begin{table}
    \centering
    \caption{Simulation Parameters}
    \begin{tabular}{c|c|c}
        Variable & Description & Value \\[0.5ex] 
        \hline
        $\lambda_n$ & Node Density per \qty{}{\km\squared} & 0.2 \\
        $\lambda_m$ & Target Density per \qty{}{\km\squared}& 0.3 \\
        $|B|$ & Simulated Region & \qty{100}{\km\squared} \\
         & Number of Classes & 3 \\
         & Averaged Simulations & 30 \\ 
         & Number of Epochs & 15 \\ 
         & Epoch Duration & 25s \\ [1ex]
    \end{tabular}    
    \label{tab:params}
\end{table}

Each epoch consists of 25 seconds of simulation, where target parameters evolve over time according to their motion and signal models. 
This time is long enough that the CC is able to estimate the transition probabilities for each target with sufficient accuracy, but not so long as to allow the targets to move very far from their initial positions (with respect to the size of the scenario, \qty{100}{\km\squared}). 
Future work will investigate the case where the scenario is not stationary, and the optimal solution changes with time. 

\begin{figure}
	\centering
	\includegraphics[scale=0.6]{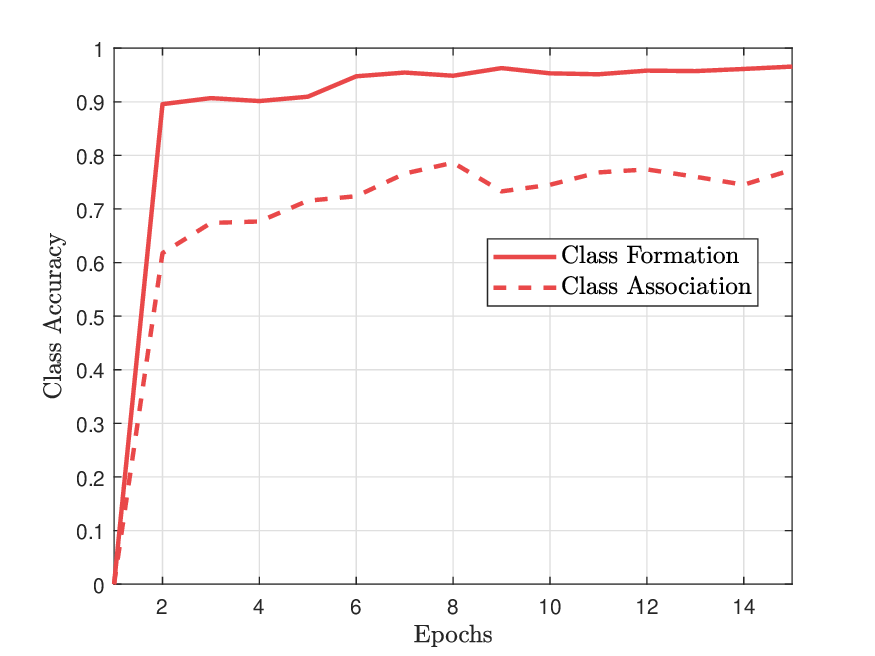}
	\caption{Class formation accuracy and track association accuracy. As the CC collects more target tracks, the accuracy of both the formation, updating, and association of classes improves. }
	\label{fig:classes}
\end{figure}

Figure \ref{fig:radar_utilization} shows that the mode control algorithm chooses about 80\% of the nodes for active observation in each time step. 
This is compared against a radar-only policy (which greedily selects only the radar option) and a ``random selection'' policy which selects radar 80\% of the time.

\begin{figure}
	\centering
	\includegraphics[scale=0.6]{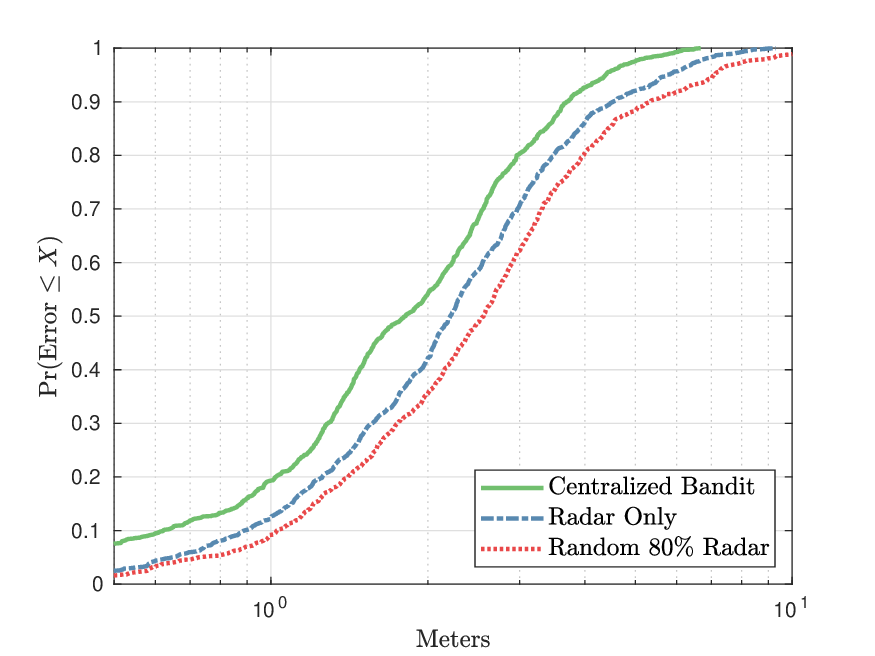}
	\caption{Utilization of mode control reduces the target tracking error due to better characterization of target motion. This represents the error distribution for the final epoch. }
	\label{fig:ecdf}
\end{figure}

\begin{figure}
	\centering
	\includegraphics[scale=0.6]{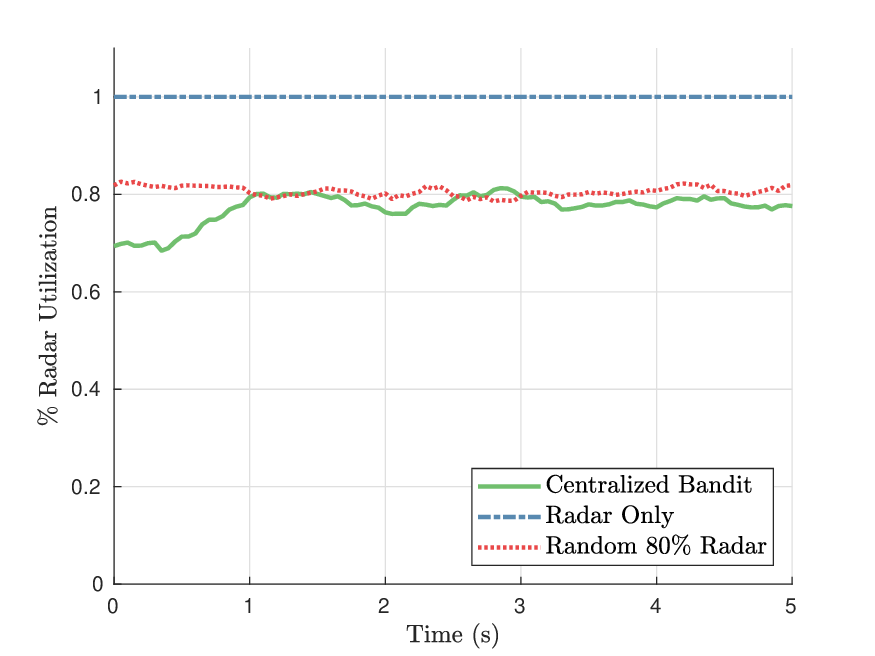}
	\caption{While the radar utilization does not change, the performance improves. }
	\label{fig:radar_utilization}
\end{figure}

Figure \ref{fig:ecdf} demonstrates that the centralized bandit mode selection policy obtains greater performance than either the 80\%-active policy (which uses radar the same portion of time) or the radar-only policy. 
The reported error is for the final epoch. 
The centralized bandit policy is better due to use of the passive action, which effectively characterizes targets so that the network is able to assign classes accurately.  
The random policy conducts radar 80\% of  the time, and obtains 10-15\% worse error than the bandit policy. 
The centralized bandit policy performs radar 70-80\% of the time and still obtains 5\% better error than the radar-only policy. 

Neither the radar-only policy nor the random policy utilize target class information. 
These results indicate that by forming target classes, a CRN is able to improve accuracy while reducing the need to perform active radar at all times. 
This in turn reduces the power consumption of the network as well as the maximum intercept range available to non-cooperative or adversarial systems which may try to impede the CRN.


\section{Conclusions}
\label{sec:conclusions}

In this work we investigated the capabilities available to a network when multiple modes of operation are present. 
This represents the first contribution towards the field of Cognitive Radar Network (CRN) mode control. 
In particular, we examine the case where the nodes of a CRN have, in addition to active radar, the ability to conduct passive signal parameter estimation. 
In each of many time steps, every node in the CRN can operate in one of these two modes. 
When conducting radar observation, a node provides to the Central Coordinator (CC) an estimate of the position and velocity of all targets within its range. 
When conducting passive signal parameter estimation, a node provides to the CC an estimate of the signal emissions from all targets within its range. 
The passive measurements are associated with targets via direction of arrival estimation. 

In addition to these direct observations, the CC maintains records of the motion model (i.e., constant velocity, constant turn, etc. ) and the history of signal emissions of each target. 
Modeling both of these as Markov processes, the CC estimates the transition probabilities for each of these parameters over time. 
On fixed intervals (``epochs’’), the CC then clusters these targets into ``classes’’ which contain targets with similar behavior. 
Finally, using these constructed target classes, the CC is able to estimate the class of future targets in order to determine their likely behavior. 
In this way, the motion model of targets is able to be estimated using \emph{passive} observation, as the class of a target is dependent on both signal and motion characteristics. 
So, the CC is able to trade between active radar observation and passive signal parameter estimation over time. 

We show that the use of this estimation technique can 1) reduce the effective radiated power of the network by decreasing the proportion of time radar is performed and 2) increase the target tracking accuracy of the network by leveraging ``prior'' information on the targets. 
We demonstrate that passive signal parameter estimation is constrained by the maximum detectable range and probability of intercept of the targets emissions.
In addition, we contribute mathematical analysis of the class formation technique. 

Future work in the field of CRN mode control will include ``distributed’’ control, where the nodes within the CRN choose for themselves the operation mode. 
One technique to accomplish this may be \emph{age of information} metrics, where the time since each action was performed is used to determine the next action, coupled with target estimation information.


\bibliographystyle{IEEEtran}
\bibliography{bibli}

\begin{thebibliography}{10}
\providecommand{\url}[1]{#1}
\csname url@samestyle\endcsname
\providecommand{\newblock}{\relax}
\providecommand{\bibinfo}[2]{#2}
\providecommand{\BIBentrySTDinterwordspacing}{\spaceskip=0pt\relax}
\providecommand{\BIBentryALTinterwordstretchfactor}{4}
\providecommand{\BIBentryALTinterwordspacing}{\spaceskip=\fontdimen2\font plus
\BIBentryALTinterwordstretchfactor\fontdimen3\font minus \fontdimen4\font\relax}
\providecommand{\BIBforeignlanguage}[2]{{%
\expandafter\ifx\csname l@#1\endcsname\relax
\typeout{** WARNING: IEEEtran.bst: No hyphenation pattern has been}%
\typeout{** loaded for the language `#1'. Using the pattern for}%
\typeout{** the default language instead.}%
\else
\language=\csname l@#1\endcsname
\fi
#2}}
\providecommand{\BIBdecl}{\relax}
\BIBdecl

\bibitem{howard2022_MMABjournal}
W.~W. Howard, A.~F. Martone, and R.~M. Buehrer, ``Distributed online learning for coexistence in cognitive radar networks,'' \emph{IEEE Transactions on Aerospace and Electronic Systems}, pp. 1--14, 2022.

\bibitem{haykin2005}
S.~{Haykin}, ``Cognitive radar networks,'' in \emph{1st IEEE International Workshop on Computational Advances in Multi-Sensor Adaptive Processing, 2005.}, 2005, pp. 1--3.

\bibitem{thornton2022_universaljournal}
C.~E.~Thornton, R.~M. Buehrer, H.~S.~Dhillon, and A.~F.~Martone, ``Universal learning waveform selection strategies for adaptive target tracking,'' \emph{IEEE Transactions on Aerospace and Electronic Systems}, pp. 1--17, 2022.

\bibitem{Bogler1987}
P.~L. Bogler, ``{Shafer-Dempster Reasoning with Applications to Multisensor Target Identification Systems},'' \emph{IEEE Transactions on Systems, Man, and Cybernetics}, vol.~17, no.~6, pp. 968--977, 1987.

\bibitem{thornton2020efficient}
C.~E. Thornton, R.~M. Buehrer, and A.~F. Martone, ``Efficient online learning for cognitive radar-cellular coexistence via contextual thompson sampling,'' in \emph{GLOBECOM 2020 - 2020 IEEE Global Communications Conference}, 2020, pp. 1--6.

\bibitem{martone2017adaptable}
A.~Martone, K.~Gallagher, K.~Sherbondy, A.~Hedden, and C.~Dietlein, ``Adaptable waveform design for enhanced detection of moving targets,'' \emph{IET Radar, Sonar \& Navigation}, vol.~11, no.~10, pp. 1567--1573, 2017.

\bibitem{howard2022_decentralized_conf}
W.~W. Howard and R.~M. Buehrer, ``Decentralized bandits with feedback for cognitive radar networks,'' in \emph{MILCOM 2022 - 2022 IEEE Military Communications Conference (MILCOM)}, 2022, pp. 717--722.

\bibitem{howard2023_hybridjournal}
------, ``Hybrid cognition for target tracking in cognitive radar networks,'' \emph{IEEE Transactions on Radar Systems}, vol.~1, pp. 118--131, 2023.

\bibitem{howard2023_timelyjournal}
W.~W. Howard, A.~F. Martone, and R.~M. Buehrer, ``Timely target tracking: Distributed updating in cognitive radar networks,'' 2023.

\bibitem{6807568}
G.~Alirezaei, M.~Reyer, and R.~Mathar, ``Optimum power allocation in sensor networks for passive radar applications,'' \emph{IEEE Transactions on Wireless Communications}, vol.~13, no.~6, pp. 3222--3231, 2014.

\bibitem{Hong1993}
L.~Hong and A.~Lynch, ``{Recursive Temporal-Spatial Information Fusion with Applications to Target Identification},'' \emph{IEEE Transactions on Aerospace and Electronic Systems}, vol.~29, no.~2, pp. 435--445, 1993.

\bibitem{Lei2020}
Z.~Lei, P.~Cui, and Y.~Huang, ``{Multi-platform and Multi-sensor Data Fusion Based on D-S Evidence Theory},'' in \emph{2020 IEEE 3rd International Conference on Computer and Communication Engineering Technology (CCET)}, 2020, pp. 6--9.

\bibitem{Li2022}
R.~Li, Y.~Zhang, and J.~Sun, ``{Active and Passive Radar Target Fusion Recognition Method Based on Bayesian Network},'' in \emph{2022 15th International Congress on Image and Signal Processing, BioMedical Engineering and Informatics (CISP-BMEI)}, 2022, pp. 1--5.

\bibitem{Challa2001}
S.~Challa and G.~Pulford, ``{Joint target tracking and classification using radar and ESM sensors},'' \emph{IEEE Transactions on Aerospace and Electronic Systems}, vol.~37, no.~3, pp. 1039--1055, 2001.

\bibitem{Cao2018}
W.~Cao, J.~Lan, and X.~R. Li, ``{Extended Object Tracking and Classification Using Radar and ESM Sensor Data},'' \emph{IEEE Signal Processing Letters}, vol.~25, no.~1, pp. 90--94, 2018.

\bibitem{wiley2006}
R.~G. Wiley, \emph{{ELINT: The Interception and Analysis of Radar Signals}}.\hskip 1em plus 0.5em minus 0.4em\relax Artech House, 2006.

\bibitem{cyclo_ofdm_detection}
W.~A. Jerjawi, Y.~A. Eldemerdash, and O.~A. Dobre, ``{Second-Order Cyclostationarity-Based Detection of LTE SC-FDMA Signals for Cognitive Radio Systems},'' \emph{IEEE Transactions on Instrumentation and Measurement}, vol.~64, no.~3, pp. 823--833, 2015.

\bibitem{cyclo_lte_detection}
A.~Al-Habashna, O.~A. Dobre, R.~Venkatesan, and D.~C. Popescu, ``Joint signal detection and classification of mobile wimax and lte ofdm signals for cognitive radio,'' in \emph{2010 Conference Record of the Forty Fourth Asilomar Conference on Signals, Systems and Computers}, 2010, pp. 160--164.

\bibitem{ristic2003beyond}
B.~Ristic, S.~Arulampalam, and N.~Gordon, \emph{Beyond the Kalman Filter: Particle Filters for Tracking Applications}.\hskip 1em plus 0.5em minus 0.4em\relax Artech House, 2003.

\bibitem{1311138}
P.~Stoica and Y.~Selen, ``Model-order selection: a review of information criterion rules,'' \emph{IEEE Signal Processing Magazine}, vol.~21, no.~4, pp. 36--47, 2004.

\bibitem{bandits}
T.~{Lattimore} and C.~{Szepesvari}, \emph{Bandit Algorithms}.\hskip 1em plus 0.5em minus 0.4em\relax Cambridge University Press, 2020.

\bibitem{UCB_fischer}
P.~Auer, N.~Cesa-Bianchi, and P.~Fischer, ``Finite-time analysis of the multiarmed bandit problem,'' \emph{Machine Learning}, vol.~47, pp. 235--256, 05 2002.

\bibitem{5977487}
M.~Silbert, S.~Sarkani, and T.~Mazzuchi, ``Comparing the state estimates of a kalman filter to a perfect imm against a maneuvering target,'' in \emph{14th International Conference on Information Fusion}, 2011, pp. 1--5.

\end{thebibliography}

\end{document}